%

\documentclass[runningheads]{llncs}
\usepackage{a4wide} 
\usepackage{graphicx}
%

\usepackage{amsfonts}
\usepackage{amssymb}
\usepackage{amsmath}
\usepackage{xspace}

\newcommand{\N}{\ensuremath{\mathbb{N}}\xspace}
\newcommand{\Sset}{\ensuremath{\mathbb{S}}\xspace}
\newcommand{\Net}{\ensuremath{\mathcal{N}}\xspace}
\newcommand{\States}{\ensuremath{\mathit{States}}\xspace}
\newcommand{\PoW}{\ensuremath{\mathit{PoW}}\xspace}
\newcommand{\PoS}{\ensuremath{\mathit{PoS}}\xspace}
\newcommand{\ie}{\emph{i.e.}}

\DeclareMathOperator*{\argmax}{\mathbf{argmax}}

\newcommand\minority{\ensuremath{\frac{(|C|-1)}{2}}\xspace}

\begin{document}
\title{Stateless Distributed Ledgers\thanks{This research is partly supported by Japan Science and Technology Agency (JST) OPERA Grant Number JY280149.}}
%
%
\author{François Bonnet\inst{1}
\and
Quentin Bramas\inst{2}
\and
Xavier Défago\inst{1}
}
\authorrunning{F. Bonnet et al.}
%
\institute{
School of Computing, Tokyo Institute of Technology, Japan
\\\email{\{bonnet,defago\}@c.titech.ac.jp}
\and
ICUBE, University of Strasbourg, CNRS, France\\
\email{bramas@unistra.fr} {\small \it (corresponding author)}}
\maketitle              
\tolerance=400
\begin{abstract}
In public distributed ledger technologies (DLTs), such as Blockchains, nodes can join and leave the network at any time. A major challenge occurs when a new node joining the network wants to retrieve the current state of the ledger. Indeed, that node may receive conflicting information from honest and Byzantine nodes, making it difficult to identify the current state.

In this paper, we are interested in protocols that are \emph{stateless}, i.e., a new joining node should be able to retrieve the current state of the ledger just using a fixed amount of data that characterizes the ledger (such as the genesis block in Bitcoin).

We define three variants of stateless DLTs: weak, strong, and probabilistic. Then, we analyze this property for DLTs using different types of consensus.

\keywords{Distributed Ledger Technology  \and Blockchain \and Consensus.}
\end{abstract}
\tolerance=200
\section{Introduction}

Distributed Ledger Technologies~(DLTs) are usually partitioned depending on the type of consensus used to order incoming transactions. Here, we consider the three most used classes of technologies: Byzantine agreement, Proof of Work, and Proof of Stake. 

Byzantine agreement protocols~\cite{raynal2013distributed} are used to maintain consistent states replicated over multiple servers. It can tolerate crash or Byzantine faults, up to a number that depends on the synchrony assumption of the communications. Such protocols are executed by known servers in a fixed network environment, a setting called \emph{permissioned}. They can easily be used by  nodes to maintain a ledger of transactions. Every insertion in the ledger is the result of a consensus among participating nodes.

Blockchains based on \emph{Proof of Work (PoW)} are the first distributed ledger technologies that work in an environment where nodes can join and leave and any node can participate to the protocol. Nodes are elected randomly proportionally to their computational power, and an elected node can append transactions to the ledger. All current PoW Blockchains, including Bitcoin, work with synchronous communications, and assume correct nodes to have strictly more computational power than Byzantine nodes.

Blockchains based on \emph{Proof of Stake (PoS)} are similar to the PoW based ones, but nodes are elected proportionally to the amount of tokens they own in the blockchain itself.
In protocols based on PoS, a well-known concern is called \emph{long-range} attacks~\cite{deirmentzoglou2019survey}, where a group of nodes create an alternative chain extending an old block. This is made possible because block generation is not computationally heavy, and if a node can extend a block at a given time, nothing prevents it from extending the same block in a different way at a later time. This attack becomes even worse when the nodes owning a majority of tokens at a previous time, do not have stake at the current time. Performing such attacks could be appealing as they have nothing to lose. This problem is known as posterior corruption~\cite{bentov2016snow}.

Existing Proof of Stake based protocols such as SnowWhite~\cite{bentov2016snow}, Algorand~\cite{gilad2017algorand}, Ouroboros~\cite{kiayias2017ouroboros}, have identified such risks. The main solution proposed is to have some sort of checkpointing mechanism to avoid considering past majorities of stake holders.
A variant of this attack is called stake-bleeding attacks~\cite{gavzi2018stake}. It uses other mechanisms such as block rewards and transaction fees to allow even a past minority of stake holders to execute long-range attacks.

\paragraph{Contributions.}
In this paper we aim at defining a general model to capture the main difference between various kind of Distributed Ledger Technologies (DLTs). Our model is abstract enough to be independent of the implementation and capture only the main mechanisms of the DLTs. Then, we focus on one property that we call the \emph{Stateless Property}. We define this property in our general DLT model and show whether existing technologies satisfies it or not. In particular the fact that Proof of Stake based DLTs are not Stateless implies the existence of the long-range attack vulnerability.

We believe that our model could be use independently to capture other properties and compare technologies in an abstract way.

\section{Model}
We consider that time is discrete and at each time $t\in\N$, $\Net_t$ represents the set of nodes in the network. We consider that communication is instantaneous and there is a communication link between any two nodes in the network at any given time. We also assume that each node is identified, and is able to securely sign messages.

A distributed ledger is a data structure with an \emph{``append''} function. It is maintained by a set of processing nodes. The network receives events and the nodes react to the events according to the distributed ledger protocol. Each time the ledger is updated, a new time instant begins. Formally, a DLT is characterized by its initial state $I$ and a state transition function $\sigma$ that takes a current state $S_t$, the events $E_t$, and the network $\Net_t$ containing all the nodes that are online at least once before the next ``append''. Then $\sigma$ returns a new state $S_{t+1}$ when the ``append'' function is called at time $t+1$. A state can be seen as a sequence of ``append'' and we write $S \preccurlyeq S'$ when state $S$ is a prefix of state $S'$, \ie, $S'$ can be obtained from $S$ by appending data. The state $S^{-k}$ denotes the truncated state $S$ where the last $k$ occurrences of ``append'' of $S$ are omitted. 

Given a DLT $(I, \sigma)$, a sequence of $\Net = (\Net_t)_{t\in\N}$ of networks and a sequence $E = (E_t)_{t\in\N}$ of events, we can construct the sequence of states $\States(I,\sigma, \Net, E) = (S_t)_{t\in \N}$ in the following way $S_0 = I$ and $\forall t\in\N, S_{t+1} = \sigma(S_{t}, \Net_t, E_t)$.

\paragraph{Stateless DLT.}
When a new node joins the network, it obtains the current state from the other nodes in the network. Informally, we say a DLT is stateless if a new joining node is able to deduce what is the current state of the DLT from the information received from the current network and by knowing the initial state.

At a given time $t$, each node $u\in\Net_t$ has a local state $LS(u)$. For a correct node, the local state is exactly the current state $S_t$ (communications are supposed instantaneous so all correct nodes agree on the current state at any time). For Byzantine nodes, the local state is constructed by an adversary. The set of pairs $(u,LS(u))$ for all nodes $u$ in $N_t$ is denoted $\Sset_t$ \ie, $\Sset_t = \{(u, LS(u)) \,|\,u\in\Net_t\}$.

\begin{definition}[Weakly Stateless DLT]
A DLT is weakly stateless if there exists a function $f$ such that $f(I, \Sset_t) = S_t$.
\end{definition}

\begin{definition}[Strongly Stateless DLT]
A DLT is strongly stateless if there exists a function $f$ such that $f(I, \Sset_t) = S_t$ and, for any subset $A\subset \Sset_t$, $f(I, A) = S_t$ or $\bot$.
\end{definition}

\begin{definition}[Probabilistically Stateless DLT]
A DLT is probabilistically stateless if there exists a function $f$ such that $\forall k,t,t'\in \N$, with $k\leq t \leq t'$, $f(I, \Sset_t)^{-k} \preccurlyeq S_{t'}$ with probability greater than $1-O(e^{-ck})$ for some constant $c>0$.
\end{definition}

\section{Examples of Stateless DLTs}

\subsubsection{Byzantine Agreement Protocols}
It is well-known that, in a fixed network $C$ of known nodes where communication is synchronous, consensus is possible and can tolerate up to \minority Byzantine nodes~\cite{raynal2013distributed}.

We denote by $\sigma_{BA}$ the transition function of a Byzantine agreement protocol among the nodes in $C$. $\sigma_{BA}$ represents the fact that, at time~$t$, the nodes in $C\subset \Net_t$ perform a Byzantine agreement to order the transactions received in $E_t$ and update the state~$S_t$ accordingly to obtain $S_{t+1}$.
In the Byzantine agreement protocol, we consider that the state contains the information about the set $C$ of nodes participating in the consensus protocol.

Interestingly, when a majority of nodes in $C$ are correct, any node $u$ outside $C$ can ask the nodes in $C$ for the current state. Then, the current state is the one received by a majority of nodes in $C$, and is guaranteed to be correct. From this we deduce the following theorem:

\begin{theorem}
For a set of nodes $C$ and an initial state $I$ (which contains the information of $C$), if $\forall t\in\N,\, C\subset \Net_t$ and at most \minority nodes in $C$ are Byzantine, then the DLT $(I,\sigma_{BA})$ is strongly stateless.
\end{theorem}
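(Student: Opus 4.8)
The plan is to exhibit the witnessing function $f$ explicitly and then check the two requirements of the definition of a strongly stateless DLT by a simple strict‑majority count on the reports coming from the nodes of $C$. Since, by hypothesis, the initial state $I$ carries the information of $C$, the function $f$ can begin by reading $C$ out of $I$. Given an input set of pairs $A=\{(u,LS(u))\}$, let $f$ restrict attention to the reports of nodes of $C$, i.e. to $A_C = \{(u,LS(u))\in A \mid u\in C\}$; if there is a state value $s$ reported by strictly more than $|C|/2$ of the nodes of $C$ (nodes of $C$ absent from $A$ simply not counting), output $s$, and otherwise output $\bot$. This is well defined, because two distinct values each reported by a strict majority of the $|C|$ nodes would correspond to two disjoint subsets of $C$ of total size $>|C|$, which is impossible.

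\emph{Completeness.} I would argue $f(I,\Sset_t)=S_t$ as follows. Since $C\subset\Net_t$ for every $t$, each node of $C$ contributes its pair to $\Sset_t$. By the model, every correct node has local state equal to the current state $S_t$, and by hypothesis at most $\minority$ nodes of $C$ are Byzantine, so at least $|C|-\minority = \frac{|C|+1}{2} > |C|/2$ nodes of $C$ report $S_t$. Hence $S_t$ enjoys a strict majority among the reports of $C$ and $f$ returns it.

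\emph{Soundness.} For an arbitrary $A\subset\Sset_t$, I would show $f(I,A)\in\{S_t,\bot\}$. Suppose $f(I,A)=s\neq\bot$; then strictly more than $|C|/2$ nodes of $C$, all present in $A$, report $s$. Since there are at most $\minority < |C|/2$ Byzantine nodes in $C$, not all of these reporters can be Byzantine, so at least one correct node of $C$ reports $s$; but a correct node's local state is $S_t$, whence $s=S_t$. Together with completeness this establishes that $(I,\sigma_{BA})$ is strongly stateless.

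The whole argument is essentially quorum intersection via a strict‑majority count, so I do not expect a genuine obstacle. The one point deserving care is the soundness step: one must rule out $f$ returning a third value different from both $S_t$ and $\bot$, and this is precisely where the strict inequality $\minority < |C|/2$ — equivalently, the assumption that Byzantine nodes form a strict minority of $C$ — is used. A minor side remark is that $\minority$ being an integer presupposes $|C|$ odd, consistent with the classical synchronous bound.
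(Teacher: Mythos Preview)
Your proposal is correct and follows essentially the same approach as the paper: define $f$ to return the state reported by a strict majority of nodes in $C$ (read from $I$), and $\bot$ otherwise. Your argument is in fact more detailed than the paper's, which defines the same $f$ but does not spell out the well-definedness check or the soundness step for arbitrary $A\subset\Sset_t$.
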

\begin{proof}
Assuming at most \minority nodes are Byzantine, for any sequence $\Net$ and $E$, all the correct nodes in $C$ agree on all the state $S_t \in \States(I, \sigma_{BA}, \Net, E)$ for any $t\in \N$.
The function $f$ returns the local state that appears in at least $\frac{|C|+1}{2}$ pairs associated with a node in $C$. If no such state exists (if the set of local states is only a subset of $\Sset$), $f$ returns $\bot$. Formally, we have\\
$f(I, A) = S$ if $|\{u \in C | (u,S)\in A\}| > \frac{|C|}{2}$, and $f(I,A) = \bot$ otherwise.
\qed\end{proof}

\subsubsection{Proof of Work Blockchains}

In PoW Blockchains, there are no assumptions about the nodes in the network, except that at any time $t$, in $\Net_t$, the computational power of the correct nodes is strictly greater than the computational power of the Byzantine nodes. Then, the transition function $\sigma_{\PoW}$ applies an ordering on the transactions received in $E_t$ (decided by some node in the network) and appends the block of transactions to state $S_t$ resulting in $S_{t+1}$. In addition, from state~$S_t$, one can compute the proof of work performed until time $t$, denoted $\PoW(S_t)$. So $\PoW(S_{t+1})$ is the sum of $\PoW(S_{t})$ and the proof of work corresponding to the last ``append''. For the initial state $I$, $\PoW(I) = 0$.

\begin{theorem}
For an initial state $I$, if for all $t$ the computational power of correct nodes in $\Net_t$ is strictly greater than the computational power of Byzantine nodes in $\Net_t$, then the DLT $(I,\sigma_{\PoW})$ is probabilistically stateless.
\end{theorem}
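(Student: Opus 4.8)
The plan is to take for $f$ the \emph{heaviest valid chain} selection rule and then reduce the statement to the common-prefix analysis of longest-chain blockchains. Concretely, on input $(I,A)$, $f$ first discards every pair $(u,S)\in A$ whose state $S$ is not a syntactically valid state obtainable from $I$ by a sequence of appends each carrying a correct proof of work; among the surviving states it returns one of maximum $\PoW$ value, breaking ties by a fixed deterministic rule. Since every correct node of $\Net_t$ reports the true state $S_t$, the set $\Sset_t$ always contains a pair $(u,S_t)$ with $u$ correct, so $f(I,\Sset_t)$ is well defined and satisfies $\PoW(f(I,\Sset_t))\ge\PoW(S_t)$. If $f(I,\Sset_t)=S_t$ the conclusion is immediate, so the real work is the case where $f$ returns a valid state $\hat S\neq S_t$ with $\PoW(\hat S)\ge\PoW(S_t)$.

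In that case I would let $P$ be the longest common prefix of $\hat S$ and $S_t$ and show that, except with probability $O(e^{-ck})$, the portion of $\hat S$ beyond $P$ consists of at most $k$ appends, i.e. $\hat S^{-k}\preccurlyeq P\preccurlyeq S_t$. The first step is to observe that, $P$ being a common prefix, $\PoW(\hat S)\ge\PoW(S_t)$ is equivalent to the statement that the proof of work the adversary placed on its fork (the part of $\hat S$ after $P$) is at least the proof of work the honest nodes placed on the canonical chain after $P$; moreover the blocks of that fork are necessarily produced by Byzantine nodes, since they do not appear in $S_t$. So the claim reduces to a race: over the time window that starts when the last block of $P$ was appended and ends at time $t$, the honest nodes extended the canonical chain to $S_t$ while the Byzantine nodes built the competing fork, and I need that this Byzantine fork reaches depth more than $k$ while remaining at least as heavy only with probability exponentially small in $k$.

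The second step is to prove that race bound. The plan is to equip the PoW model with the usual probabilistic structure — mining successes modelled as independent Bernoulli events per unit of hashing power per time step — and to apply a Chernoff/Hoeffding bound to the difference of the two accumulated-proof-of-work processes. Because the hypothesis guarantees that at every time the honest computational power strictly exceeds the Byzantine one (and, implicitly, by a margin bounded away from zero), the honest process dominates in expectation, and the probability that the Byzantine process catches up over a window long enough to contain $k$ Byzantine blocks is $O(e^{-ck})$ for a constant $c>0$ depending on that margin. This is exactly the common-prefix / chain-growth argument of the Bitcoin backbone analysis transcribed into the present model, and I expect it to be the main obstacle: pinning down precisely what ``a unit of computational power'' buys, the independence of mining attempts across nodes and rounds, the union bound over all possible fork depths, and the fact that the adversary may adaptively choose which state to present, all require care, whereas the concentration computation itself is standard.

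Finally I would transfer the conclusion from $t$ to any $t'\ge t$: since $\sigma_{\PoW}$ only appends to the current state, the model gives $S_t\preccurlyeq S_{t+1}\preccurlyeq\cdots\preccurlyeq S_{t'}$ deterministically, so on the same high-probability event $\hat S^{-k}\preccurlyeq S_t\preccurlyeq S_{t'}$. Combining the two cases, $f(I,\Sset_t)^{-k}\preccurlyeq S_{t'}$ holds with probability at least $1-O(e^{-ck})$ for every $k\le t\le t'$, which is the definition of a probabilistically stateless DLT. The remaining pieces — defining $f$, discarding invalid reported states, and the monotonicity step — are routine; the crux is importing the exponential common-prefix bound under the strict honest-majority-of-power assumption.
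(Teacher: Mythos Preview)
Your proposal is correct and follows essentially the same approach as the paper: define $f$ as the heaviest-valid-chain rule and bound the probability that an adversarial chain diverges from $S_t$ by more than $k$ blocks via the standard race analysis. The only difference is granularity: where you sketch the common-prefix argument via Chernoff-type concentration (backbone style), the paper simply invokes an external double-spend bound (Grunspan--P\'erez) to obtain the $O(e^{-ck})$ estimate.
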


\begin{proof}
Let $f$ be the function returning the local state that maximizes the Proof of Work, $f(I, \Sset_t) = \argmax_S(\{\PoW(S) | \exists u, (u,S)\in\Sset_{t}\})$. Such a local state could have been generated by an adversary. If $k$ denotes the number of blocks we have to truncate to obtain a prefix of the correct state $S_t$, then the probability decreases exponentially fast when $k$ tends to infinity. Indeed, let $p_t$, resp. $q_t$, be the computational power of the correct nodes, resp. of the adversary, at time $t$. By assumption, $\forall t, p_t>q_t$. Let $\lambda_t = \max_{t'\leq t}(p_{t'}q_{t'})$.

From~\cite{grunspan2018double} (Th.2), we deduce that, at a given time $t$, the probability that an adversary rewrites the last $k$ blocks is in $O(e^{-cz})$ with $c=\log(1/(4\lambda_t)) > 0$.
\qed\end{proof}

\section{Impossibility of Stateless Proof of Stake Blockchains}

In PoS Blockchains, the consensus at a given time $t$ is possible assuming the nodes owning a majority of the tokens are correct. So we can assume that the state transition function $\sigma_{\PoS}$ is performed by those correct nodes, owning a majority of the tokens, creating a new state $S_{t+1}$ from state $S_t$ and incoming events $E_t$. However nothing prevents the nodes in $\Net_0$ to create an alternative state after time $0$.

\begin{theorem}
Even if, at each time, all the nodes owning tokens are correct, the PoS DLT is not weakly stateless.
\end{theorem}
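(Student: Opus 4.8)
The plan is to exhibit two different executions that, from the standpoint of a newly joining node, produce indistinguishable sets of local states $\Sset_t$ but correspond to genuinely different current states $S_t$; since a weakly stateless DLT requires a function $f$ with $f(I,\Sset_t)=S_t$, such an $f$ cannot exist. The key observation enabling the counterexample is the one already flagged in the text: nothing prevents the nodes that owned the stake at time $0$ (or any early time) from using their old keys to sign an alternative sequence of ``append'' operations, since PoS block creation is computationally free and the stake-majority check at time $0$ is satisfied by both the real chain and the forged one.

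Concretely, I would proceed as follows. First, fix an initial state $I$ and a set of nodes holding all tokens at time $0$. Consider an honest execution with networks $\Net$ and events $E$ producing states $(S_t)$; in particular after the first append we have some state $S_1 = \sigma_{\PoS}(I,\Net_0,E_0) \neq I$. Second, construct a \emph{second} execution in which the \emph{same} nodes, still honest with respect to the protocol rules but now responding to a different event stream $E'$, build a divergent state $S_1' = \sigma_{\PoS}(I,\Net_0,E_0') \neq S_1$ — the point being that the PoS validity predicate only checks signatures from the time-$0$ stake majority, which both $S_1$ and $S_1'$ carry. Third, I would arrange the network sequences so that at the time $t$ in question the \emph{online} nodes, and hence the multiset of pairs $(u,LS(u))$ they present, coincide between the two executions: for instance, let all the time-$0$ stakeholders go offline after time $1$ and be replaced by a fresh set of nodes that, in execution one, have been told $S_t$, and in execution two have been told $S_t'$. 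Then $\Sset_t$ in the two executions is built from disjoint node-sets presenting conflicting states, and crucially there is no certificate distinguishing the ``real'' chain from the forged one, because the only nodes who could vouch for either are long gone. I would phrase this as: there exist $(\Net,E)$ and $(\Net',E')$ with $\Sset_t = \Sset_t'$ but $S_t \neq S_t'$, so no $f$ can satisfy both $f(I,\Sset_t)=S_t$ and $f(I,\Sset_t)=S_t'$.

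The main obstacle is making the indistinguishability argument airtight given the model's definitions. In particular, $\Sset_t$ is indexed by the actual node identities in $\Net_t$, so I cannot literally reuse the same honest node presenting two different states; instead I must either (i) have the two executions involve \emph{different but symmetric} sets of fresh joiners so that the forged-vs-real distinction is invisible, or (ii) argue that even with a fixed $\Net_t$, an adversary controlling the posterior-corrupted old keys can inject pairs $(u,S_t')$ that are formally valid PoS states, so that $\Sset_t$ contains \emph{both} $S_t$ and $S_t'$ with no way for $f$ to break the tie. Approach (ii) is cleaner and matches the long-range-attack narrative: the subtlety to nail down is that the theorem's hypothesis ``all nodes owning tokens are correct'' refers to current stakeholders, while the forgers are precisely nodes that \emph{used to} own tokens and now do not — so they are not constrained by the correctness assumption, and the forged state $S_t'$ is a syntactically legal state reachable by $\sigma_{\PoS}$ from $I$ under some admissible $(\Net',E')$. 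Once that is established, the contradiction with the existence of $f$ is immediate.
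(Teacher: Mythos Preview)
Your proposal is essentially correct and follows the same long-range-attack-plus-symmetry strategy as the paper: build an alternative valid state from the initial stake, arrange for a joining node to see a set $\Sset_t$ that is symmetric between the real and forged chains, and conclude that no function $f$ can pick out $S_t$.

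Two points of comparison are worth making. First, the paper's concrete construction differs from yours: rather than varying the event stream ($E' \neq E$), the paper keeps the \emph{same} events $E$ and instead varies the node identities via a bijection $m:\Net_t\to\Net_t'$ onto a parallel adversarial network. The adversary, having posterior control of $\Net_0$, replays the entire execution in $\Net'$; the two state sequences then differ only in the addresses appearing in them. At time $t$ the adversary connects $\Net_t'$ alongside $\Net_t$, so $\Sset_t$ is the union of the honest pairs and their mirror images. This makes the symmetry argument completely mechanical: swapping $\Net\leftrightarrow\Net'$ yields an equally admissible world with the \emph{identical} $\Sset_t$ but correct state $S_t'$ instead of $S_t$.

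Second, your stated preference for approach~(ii) --- ``$\Sset_t$ contains both $S_t$ and $S_t'$ with no way for $f$ to break the tie'' --- does not by itself close the argument. Merely having two valid-looking states in $\Sset_t$ does not preclude $f$ from breaking the tie by some fixed deterministic rule (say, lexicographic order on node identities); what is required is a \emph{second} admissible execution with the same $\Sset_t$ in which the other state is the correct one, so that any rule $f$ chooses is wrong in one of the two worlds. That is precisely your approach~(i), and it is the step the paper actually carries out. So your instinct that~(i) handles the identity-indexing obstacle is the right one; (ii) supplies the mechanism for the adversary to build $S_t'$, but (i) is what delivers the contradiction.
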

\begin{proof}
To simplify, assume that, at some time $t> 0$, all nodes in $\Net_0$ are owned by the adversary. Indeed, there is no assumption on the correctness of $\Net_0$ after time $0$. Then, the adversary can simulate an execution of the DLT in a sequence of networks $\Net_t'$ where each node in $\Net_t'$ is owned by the adversary and such that there is a bijection $m$ mapping any nodes in $u \in \Net_t$ to a malicious node $m(u)\in \Net_t'$, for all $t\in\N$ ($m(u) = u$ if $u$ is already malicious). Hence, the sequences of networks $(\Net_t')_t$ and $(\Net_t)_t$ differ only in the addresses that identify nodes. The adversary can execute the same DLT using the same sequence of events $(E_t)_t$ but in the malicious sequence of networks $\Net'=(\Net_t')_t$, which gives a different sequence of states $\States(I,\sigma_{\PoS}, \Net', E) \neq \States(I,\sigma_{\PoS}, \Net, E)$. 

At time $t$, the adversary can connect all nodes $\Net_t'$ to the network so that the set of local states is $\Sset_t = \{(u, LS(u)) | u\in \Net_t\} \cup \{(u,LS(u)) | u\in \Net_t'\}$. The set is symmetric as half of the local states contain $S_t$ and the other half contain $S_t'$ and both states differ only in the addresses used to identify nodes.

$\States(I,\sigma_{\PoS}, \Net', E)$ is a valid sequence of states if the sequence of networks was $\Net'$ and if the adversary creates symmetrically the state $S_t$ using the sequence of networks $\Net_t$. A function $f$ should answer $S_t$ in the first case and $S_t'$ in the second case, with exactly the same input, which is a contradiction.
\qed\end{proof}

One way to make the PoS DLT stateless is to assume that any set of nodes owning a majority of the token at a given time $t$ are correct at any time $t'> t$. This is a very strong assumption as for instance, it gives the same kind of trust in the initial set of nodes as in the Byzantine agreement protocol. Indeed, if the nodes in $\Net_0$ are still connected at time $t$ then, they act as a trusted committee. We could even remove the proof of stake entirely and only rely on standard Byzantine agreement between nodes in $\Net_0$. In the case where nodes in $\Net_0$ go offline, they can select replacement nodes (using any method including election or simply one-to-one replacement), and include a signed message that define their choice in the state so that any joining node could identify the current set of trusted nodes. In future work, we plan to identify other sufficient conditions for the existence of Stateless DLTs.

\bibliographystyle{plain}
\bibliography{biblio}

\end{document}